\def\sech{\mathop{\rm sech}\nolimits}
\newcommand{\rem}[1]{}
\newtheorem{lemma}{Lemma}
\newtheorem{remark}{Remark}
\title{On the impossibility of solitary Rossby waves in meridionally unbounded domains}
\author{Georg A. Gottwald}
\address{School of Mathematics and Statistics, University of Sydney, NSW 2006, Australia}
\email{georg.gottwald@sydney.edu.au} 
\author{Dmitry E. Pelinovsky}
\address{Department of Mathematics, McMaster University, Hamilton ON Canada L8S 4K1   \\
Department of Applied Mathematics, Nizhny Novgorod State Technical University, 603950, Russia}
\email{dmpeli@math.mcmaster.ca}
\begin{document}


\begin{abstract}
Evolution of weakly nonlinear and slowly varying Rossby waves in planetary atmospheres and oceans
is considered within the quasi-geostrophic equation on unbounded domains.
When the mean flow profile has a jump in the ambient potential vorticity, localized eigenmodes are trapped
by the mean flow with a non-resonant speed of propagation. We address amplitude equations for these modes.
Whereas the linear problem is suggestive of a two-dimensional  Zakharov-Kuznetsov equation, we found that
the dynamics of Rossby waves is effectively linear and moreover confined to zonal waveguides of the mean flow.
This eliminates even the ubiquitous Korteweg-de Vries equations
as underlying models for spatially localized coherent structures in these geophysical flows.\\

\vspace{0.1cm}
\centerline{\sc \large dedicated to Roger Grimshaw, our supervisor, to whom we owe so much}
\end{abstract}



\maketitle


\section{Introduction}

Planetary atmospheres and oceans are strongly turbulent media. However, highly ordered coherent structures arise in a process of self-organization, and dominate the dynamics on slow temporal and large spatial scales. Rapidly rotating geophysical flows with small variations in stratification compared to the background stratification are well described by the quasi-geostrophic equation
\begin{align}
\label{charney}
\frac{D q}{Dt}=0,
\end{align}
where $q = \nabla^2\psi + \beta y - F\psi$ is the shallow-water potential vorticity,
$D/Dt=\partial_t+u\partial_x+v\partial_y$ denotes the material derivative,
$\psi$ is the stream function for the horizontal velocities given by $(u,v)=(-\psi_y,\psi_x)$,
$f=f_0+\beta y$ describes the ambient rotation of the planet at latitude $y$,
and $F=1/L_R^2$ is defined in terms of the Rossby radius of deformation $L_R=\sqrt{g\bar H}/f_0$
with Earth's gravitational acceleration $g$ and typical height $\bar H$ \cite{Vallis,Salmon,Pedlosky}.
This equation was first derived by  Charney~\cite{Charney48}, and independently by Obukhov \cite{Obukhov49}. In the context of
low-frequency drift waves in magnetized plasmas the quasi-geostrophic equation
(\ref{charney}) is known as the Hasegawa--Mima equation \cite{HasegawaMima78}.
In the presence of an ambient meridional mean flow $U$ that depends on latitude $y$, we employ the decomposition
$$
\psi = -\int_0^y U(y) dy + \tilde\psi
$$
and $q=q^{(0)}+q^{(1)}$ with
$$
q^{(0)} = \beta y + F \int_0^y U(y) dy - U', \quad q^{(1)} = \nabla^2\tilde\psi-F\tilde\psi,
$$
after which the quasi-geostrophic equation (\ref{charney}) is expressed as
\begin{align}
\label{qgp}
\left(  \frac{\partial}{\partial t} + U\frac{\partial}{\partial x}
\right) \left( \nabla^2 \tilde\psi - F \tilde\psi \right) + (\beta + F U - U'')\, \frac{\partial \tilde\psi}{\partial x} +
J \left( \tilde\psi, \nabla^2 \tilde\psi \right) = 0,
\end{align}
where $J(a,b) = a_xb_y-a_yb_x$ is the Jacobian encapsulating the nonlinearity coming from the material derivative
and the term $\beta+FU-U'' \equiv q^{(0)}_y$ represents the leading order potential vorticity gradient.

The main purpose of this work is to address the underlying evolution equations for the stream function $\tilde\psi(x,y)$ valid on long spatial and
temporal scales in a weakly nonlinear analysis. In this context, one-dimensional solitary Rossby waves were
discussed in \cite{MaxworthyRedekopp76,Redekopp77} where the Korteweg-de Vries (KdV)  equation and the modified KdV
equation were formally derived to describe the persistence of the Great Red Spot in the Jovian atmosphere.
The associated linear problem was stated in these works without much analysis. The work of \cite{MaxworthyRedekopp76,Redekopp77} spawned a huge activity in deriving KdV equations \cite{PatoineWarn82,WarnBrasnett83,HainesMalanotteRizzoli91,MalguzziMalanotteRizzoli84,MalguzziMalanotteRizzoli85,Mitsudera94,GottwaldGrimshaw99,GottwaldGrimshaw99b} and the (bidirectional) Boussinesq equation \cite{HelfrichPedlosky93,HelfrichPedlosky95} in the geophysical context to describe and identify mechanisms for atmospheric blocking, cyclogenesis, meandering of oceanic streams. Most of these works considered laterally bounded domains allowing for one-dimensional propagation in the zonal direction of large-scale solitary waves.

In the present consideration, we investigate the possibility of solitary Rossby waves in the unbounded domain,
and whether one can derive two-dimensional extensions of the KdV models such as the Zakharov-Kuznetsov (ZK)
equation \cite{ZakharovKuznetsov74} which supports stable lump solitary waves. We establish a number of rigorous results on the characterization of
eigenvalues of the associated Rayleigh--Kuo spectral problem, which allow us to characterize
localization in the meridional direction. In particular, we prove under some natural conditions
that no localized eigenmodes with speeds below the wave continuum exist for smooth flow $U$.
However, if the meridional mean flow $U$ has a jump of the ambient potential vorticity,
we show that localized eigenmodes do exist and are trapped by the jump of the mean flow.

A formal derivation of amplitude equations, however, reveals that nonlinear solitary wave equations are not possible in the geophysical situation, neither one- nor two-dimensional. Rather the dynamics of small large-scale localized perturbations is governed by linear dispersion and wave propagation is confined to zonal wave guides prescribed by the linear localized eigenmodes of the associated Rayleigh-Kuo problem. We corroborate this
prediction of the asymptotic analysis by direct numerical simulation of the quasi-geostrophic equation (\ref{qgp}).

The paper is organized as follows. In Section~\ref{sec.LP} we develop the linear theory
for the quasi-geostrophic equation (\ref{qgp}). In Section~\ref{sec.amplitudeeqns}
we present the formal multiple scale analysis demonstrating the impossibility of non-linear solitary waves in unbounded domains. Section~\ref{sec.num} presents numerical simulations of localized initial conditions, which disperse away in the time evolution. Section~\ref{sec.disc} concludes with a discussion. 


\section{Linear theory}
\label{sec.LP}

Before we consider the weakly nonlinear and slowly varying approximation in Section \ref{sec.amplitudeeqns}, 
it is necessary to discuss some properties of the linearised version of the quasi-geostrophic equation (\ref{qgp})
with a non-constant mean flow in terms of normal mode analysis \cite{Pedlosky,Vallis}.
Linearisation of Eq. (\ref{qgp}) yields
\begin{align}
\label{lin1}
\left( \frac{\partial}{\partial t} + U \frac{\partial}{\partial x}
\right) (\nabla^2 \tilde\psi - F\tilde\psi) + (\beta + F U - U'') \frac{\partial \tilde\psi}{\partial x} = 0,
\end{align}
where $U$ depends on $y$ only. Separating variables with the normal mode $\tilde\psi(x,y,t) = e^{i k (x - c t)} \phi(y)$,
where $k \in \mathbb{R}$ is the zonal wave number, $c$ is the phase speed, and $\phi$ is the meridional profile, we obtain the
Rayleigh--Kuo spectral problem
\begin{align}
\label{e-RK}
( U - c) (\phi''-(F+k^2)\phi) + (\beta + F U - U'') \phi = 0,
\end{align}
where $c$ is the spectral parameter and $\phi$ is an eigenfunction to be found.

If $U(y) = \bar{U}$ is a constant mean flow, the spectral problem (\ref{e-RK}) admits only the continuous spectrum located at
\begin{align}
c = \bar{U} - \frac{\beta + F \bar{U}}{F + k^2 + \ell^2},
\label{phase-speed}
\end{align}
where $\ell \in \mathbb{R}$ is the meridional wave number for the Fourier mode $\phi(y) = e^{i \ell y}$.
Expanding (\ref{phase-speed}) in the long-wave limit as
\begin{align}
c = -\frac{\beta}{F} + \frac{\beta + F \bar{U}}{F^2} (k^2 + \ell^2) + \mathcal{O}((k^2 + \ell^2)^2),
\label{e.dispZK}
\end{align}
we obtain the phase speed of the linearized ZK equation \cite{ZakharovKuznetsov74}
for the two-dimensional perturbations on the constant background $\bar{U}$.
However, it is impossible to justify the quadratic nonlinearity of the ZK equation if
we start with the quasi-geostrophic equation (\ref{qgp}) for the constant mean flow $U = \bar{U}$
because the limiting Fourier mode $\phi = 1$ corresponds to $\ell = 0$. This prompts us
to look at the $y$-dependent mean flow $U$ such that $U(y) \to \bar{U}$ as $|y| \to \infty$
and to seek a localized eigenfunction $\phi$ such that $\phi(y) \to 0$ as $|y| \to \infty$
for an eigenvalue $c$ outside the continuous spectrum $[-\beta/F,\bar{U}]$, where
we assume that $\beta, F, \bar{U}$ are all positive. To avoid resonances and critical layers we assume $U(y)+\beta/F>0$ for all $y$.

In particular, we are looking at
the eigenvalues $c$ located below the continuous spectrum, that is, $c < -\beta/F$. 
Although the dispersion relation (\ref{e.dispZK}) suggests that the ZK equation may be the appropriate two-dimensional nonlinear wave model 
for localized perturbations of the $y$-dependent mean flow $U$, we will show in Section~\ref{sec.amplitudeeqns} that this is not the case. To preempt our results, we will see that the amplitude equation contains neither the quadratic nonlinearity nor the meridional component of the dispersion.

In order to formulate rigorous results of the linear theory, we place the Rayleigh--Kuo spectral problem (\ref{e-RK})
in a functional-analytic setting. Because we consider $x$-dependent perturbations in the long-wave limit,
we set $k = 0$ and rewrite the limiting spectral problem in the form
\begin{equation}
\label{Lop}
\mathcal{L}(c) \phi = 0,
\end{equation}
where
\begin{equation*}
\mathcal{L}(c) := (U-c)(\partial_y^2-F) + \beta + F U - U''.
\end{equation*}
If $U, U'' \in L^{\infty}(\mathbb{R})$, then for every $c \in \mathbb{R}$,
$\mathcal{L}(c) : H^2(\mathbb{R}) \subset L^2(\mathbb{R}) \mapsto L^2(\mathbb{R})$
is an unbounded non-selfadjoint operator with bounded coefficients.
Since $F > 0$, the self-adjoint Helmholtz operator $(F-\partial_y^2) : H^2(\mathbb{R}) \subset L^2(\mathbb{R}) \mapsto L^2(\mathbb{R})$
is invertible. Hence, introducing $\varphi := (F-\partial_y^2) \phi$, the limiting
spectral problem (\ref{Lop}) can be formulated as a standard eigenvalue problem
\begin{align}
\mathcal{M} \varphi = c\varphi \quad {\rm{with}}\quad \mathcal{M} := U - (\beta + F U - U'')(F-\partial_y^2)^{-1},
\label{Mop}
\end{align}
where $\mathcal{M} : L^2(\mathbb{R}) \mapsto L^2(\mathbb{R})$
is a bounded non-selfadjoint operator. The adjoint eigenvalue problem
\begin{align}
\mathcal{M}^* \theta = c\theta \quad {\rm{with}}\quad \mathcal{M}^* := U - (F-\partial_y^2)^{-1} (\beta + F U - U'')
\label{MopAdj}
\end{align}
coincides if $(U-c) \theta \in H^2(\mathbb{R})$ with the adjoint spectral problem
\begin{equation}
\label{LopAdj}
\mathcal{L}^*(c) \theta = 0,
\end{equation}
where $\mathcal{L}^*(c) : H^2(\mathbb{R}) \subset L^2(\mathbb{R}) \mapsto L^2(\mathbb{R})$ 
is the adjoint operator to $\mathcal{L}(c)$ given by 
\begin{equation*}
\mathcal{L}^*(c) := (\partial_y^2-F) (U-c) + \beta + F U - U''.
\end{equation*}
The condition $(U-c) \theta \in H^2(\mathbb{R})$  is satisfied for any $\theta \in L^2(\mathbb{R})$ satisfying (\ref{MopAdj})
because if $\theta \in L^2(\mathbb{R})$, then $(F-\partial_y^2)^{-1} (\beta + F U - U'') \in H^2(\mathbb{R})$,
which implies due to the spectral problem $\mathcal{M}^* \theta = c \theta$ 
that $(U-c) \theta \in H^2(\mathbb{R})$. Hence,
the eigenfunction $\theta$ of the adjoint problems for the bounded $\mathcal{M}^*$
and for the unbounded $\mathcal{L}^*(c)$ operators in (\ref{MopAdj}) and (\ref{LopAdj}), respectively, coincide.

The following result is concerned with Fredholm theory for a simple isolated eigenvalue.
\begin{lemma}
\label{lemma-1}
Assume $U, U'' \in L^{\infty}(\mathbb{R})$ and let $c_0 \in \mathbb{R}$ be
a simple isolated eigenvalue of the spectral problem (\ref{Mop})
with the eigenfunction $\varphi_0 \in L^2(\mathbb{R})$. Then,
$c_0$ is also the simple isolated eigenvalue of the adjoint spectral problem (\ref{MopAdj})
with the eigenfunction $\theta_0 \in L^2(\mathbb{R})$
and the inner product $\langle \theta_0, \varphi_0 \rangle_{L^2}$ is nonzero.
\end{lemma}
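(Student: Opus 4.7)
\medskip
\noindent\textbf{Proof plan.} The plan is to interpret the hypotheses on $c_0$ through the Riesz spectral projection for the bounded non-selfadjoint operator $\mathcal{M}$ on $L^2(\mathbb{R})$, transfer the conclusions to $\mathcal{M}^*$ by duality, and settle the non-degeneracy $\langle \theta_0, \varphi_0 \rangle_{L^2} \neq 0$ by a Fredholm alternative plus a contradiction on the algebraic multiplicity.

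First I would choose a positively oriented contour $\Gamma \subset \mathbb{C}$ enclosing $c_0$ and no other point of the spectrum of $\mathcal{M}$, which is possible because $c_0$ is isolated, and define the spectral projection
\begin{equation*}
P_0 := \frac{1}{2 \pi i} \oint_\Gamma (z - \mathcal{M})^{-1} \, dz.
\end{equation*}
Standard functional calculus gives that $P_0$ commutes with $\mathcal{M}$, that $L^2(\mathbb{R}) = \mathrm{Ran}(P_0) \oplus \ker(P_0)$, that $\mathcal{M} - c_0$ is nilpotent on $\mathrm{Ran}(P_0)$ and invertible on $\ker(P_0)$, and that the algebraic multiplicity of $c_0$ equals $\mathrm{rank}\,P_0$. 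Simplicity of $c_0$ is therefore equivalent to $\mathrm{rank}\,P_0 = 1$, in which case $\mathrm{Ran}(P_0) = \mathrm{span}(\varphi_0)$ and $\mathcal{M} - c_0$ is Fredholm of index zero with closed range $\ker(P_0)$.

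Second, I would take $L^2$-adjoints. Since $c_0 \in \mathbb{R}$, the adjoint $P_0^*$ is the Riesz projection of $\mathcal{M}^*$ at $c_0$ and has the same rank one, which immediately shows that $c_0$ is a simple isolated eigenvalue of $\mathcal{M}^*$ with a one-dimensional eigenspace. Pick $\theta_0 \in L^2(\mathbb{R})$ generating it; the calculation recorded just before the lemma then identifies $\theta_0$ with a nontrivial solution of $\mathcal{L}^*(c_0) \theta_0 = 0$ in the sense that $(U - c_0)\theta_0 \in H^2(\mathbb{R})$.

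Finally, I would prove $\langle \theta_0, \varphi_0 \rangle_{L^2} \neq 0$ by contradiction. If this inner product vanished, then by the closed range theorem applied to the Fredholm operator above, $\varphi_0 \in \ker(\mathcal{M}^* - c_0)^\perp = \mathrm{Ran}(\mathcal{M} - c_0)$, so some $\psi_0 \in L^2(\mathbb{R})$ would satisfy $(\mathcal{M} - c_0)\psi_0 = \varphi_0$. But then $(\mathcal{M} - c_0)^2 \psi_0 = 0$ while $(\mathcal{M} - c_0)\psi_0 = \varphi_0 \neq 0$, exhibiting a rank-two generalized eigenvector and forcing the algebraic multiplicity of $c_0$ to be at least two, contradicting simplicity. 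The one substantive step is the Riesz projection construction, which converts the abstract hypothesis ``simple isolated eigenvalue'' into the concrete Fredholm and orthogonality structure that drives the contradiction; every subsequent manipulation is purely linear-algebraic on the invariant decomposition produced by $P_0$.
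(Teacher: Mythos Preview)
Your proposal is correct and follows essentially the same approach as the paper, which simply states that ``the result follows by Fredholm theory for bounded operators'' and nothing more. Your Riesz--projection argument is the standard way to unpack that sentence: it makes precise why simplicity of $c_0$ for $\mathcal{M}$ transfers to $\mathcal{M}^*$ and why the pairing $\langle\theta_0,\varphi_0\rangle_{L^2}$ cannot vanish, so you have supplied the details the paper leaves implicit.
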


\begin{proof}
The result follows by Fredholm theory for bounded operators since the reformulation
of the spectral problem (\ref{Lop}) into the form (\ref{Mop}) involves
the bounded operator $\mathcal{M}$.
\end{proof}

\begin{remark}
For each eigenfunction $\varphi_0 \in L^2(\mathbb{R})$ of the spectral problem (\ref{Mop}), one can define
the eigenfunction of the spectral problem (\ref{LopAdj}) by $\phi_0 := (F-\partial_y^2)^{-1} \varphi_0 \in H^2(\mathbb{R})$ 
and rewrite the inner product in the form
\begin{equation}
\label{inner-product}
0 \neq \langle \theta_0, \varphi_0 \rangle_{L^2} = \langle \theta_0, (F-\partial_y^2) \phi_0 \rangle_{L^2}.
\end{equation}
In what follows, we normalize $\theta_0$ from the condition that the inner product (\ref{inner-product}) is one.
\end{remark}

The following result states that no eigenvalues $c$ generally exist below $-\beta/F$
if $U$ is smooth.

\begin{lemma}
\label{lemma-2}
Assume that $U$ is a smooth bounded function of $y$ and
that $U(y) + \beta/F > 0$ for every $y \in \mathbb{R}$. Then,
the spectral problem (\ref{Lop}) admits no eigenvalues $c$ with $c < -\beta/F$.
\end{lemma}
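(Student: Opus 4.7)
The plan is to derive a Rayleigh-type energy identity via the substitution $\phi=(U-c)\chi$, then conclude by a sign argument once $c<-\beta/F$ forces two non-positive quantities to balance. First I would extract a uniform positive lower bound on $U-c$: the pointwise hypothesis $U(y)+\beta/F>0$ gives $\inf_y U\ge-\beta/F$, so for $c<-\beta/F$ we get $U(y)-c\ge-\beta/F-c=:\delta>0$ uniformly in $y$. Hence $1/(U-c)$ is bounded, and for any candidate eigenfunction $\phi\in H^2(\mathbb{R})$ the substituted function $\chi:=\phi/(U-c)$ also lies in $H^2(\mathbb{R})$. Since $\mathcal{L}(c)$ has real coefficients, it suffices to rule out real-valued $\phi$.

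Next I would simplify $\mathcal{L}(c)\phi=0$ by expansion to $(U-c)\phi''+(\beta+Fc-U'')\phi=0$, then substitute $\phi=(U-c)\chi$. Using $\phi''=(U-c)\chi''+2U'\chi'+U''\chi$, the $U''\chi$ terms cancel, and after collecting the first two surviving terms as a total derivative I obtain the Sturm--Liouville form
\[
\bigl[(U-c)^2\chi'\bigr]'+(\beta+Fc)(U-c)\chi=0.
\]
Pairing with $\chi$ and integrating over $\mathbb{R}$, the boundary term $(U-c)^2\chi'\chi\big|_{-\infty}^{+\infty}$ equals $\phi\,[(U-c)\phi'-U'\phi]/(U-c)$ evaluated at $\pm\infty$; since $\phi,\phi'\in H^1(\mathbb{R})\subset C_0(\mathbb{R})$ vanish at infinity and $U,U'$ are bounded with $U-c\ge\delta$, this boundary contribution is zero. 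The identity that results is
\[
-\int_{\mathbb{R}}(U-c)^2(\chi')^2\,dy+(\beta+Fc)\int_{\mathbb{R}}(U-c)\chi^2\,dy=0.
\]

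Finally, for $c<-\beta/F$ the coefficient $\beta+Fc$ is strictly negative while $U-c>0$, so \emph{both} terms on the left are non-positive. Their sum vanishing forces each integral to vanish separately. From the first, $(U-c)^2\ge\delta^2>0$ implies $\chi'\equiv0$, so $\chi$ is constant; combined with $\chi\in L^2(\mathbb{R})$ this yields $\chi\equiv0$ and hence $\phi\equiv0$. The main obstacle is spotting the right substitution: $\phi=(U-c)\chi$ is precisely what is needed to clear the troublesome $U''$ term and convert the non-selfadjoint problem into a Sturm--Liouville equation with sign-definite weight $(U-c)$, a device familiar from Rayleigh's theorem on the stability of parallel shear flows. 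The remaining technical point is only the uniform positivity of $U-c$, which the strict inequality $c<-\beta/F$ supplies for free.
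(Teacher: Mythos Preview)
Your proof is correct and follows essentially the same route as the paper. The paper introduces $\gamma=c+\beta/F$, rewrites the equation via the quotient rule as $-\dfrac{d}{dy}\Bigl[(U-c)^{2}\dfrac{d}{dy}\dfrac{\phi}{U-c}\Bigr]=\gamma F\phi$, and then applies Green's identity after dividing by $U-c$; this is exactly your substitution $\chi=\phi/(U-c)$ and your integral identity, with $\beta+Fc=\gamma F$ playing the role of the sign-carrying coefficient.
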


\begin{proof}
Let $c = -\beta/F + \gamma$ and rewrite the spectral problem (\ref{Lop})
in the equivalent form
$$
-(U-c) \phi'' + U'' \phi = \gamma F \phi.
$$
We are looking for eigenvalues $\gamma < 0$ with eigenfunction $\phi \in H^2(\mathbb{R})$.
By using the quotient rule, the spectral problem can be rewritten in the form
$$
- \frac{d}{dy} (U-c)^2 \frac{d}{dy} \frac{\phi}{U-c} = \gamma F \phi,
$$
thanks to the fact that $U(y) - c > 0$ for every $y \in \mathbb{R}$ 
under the assumptions of the lemma. Assume that $\phi \in H^2(\mathbb{R})$
is an eigenfunction for an eigenvalue $\gamma < 0$. 
The existence of this eigenfunction contradicts 
the first Green's identity
$$
\gamma F \int_{\mathbb{R}} \frac{\phi^2}{U-c} dy = \int_{\mathbb{R}} (U-c)^2 \left[ \frac{d}{dy} \frac{\phi}{U-c} \right]^2 dy,
$$
where the right-hand side is positive, hence $\gamma \geq 0$.
\end{proof}

The following result also eliminates possibility of eigenvalues for convex $U$.

\begin{lemma}
\label{lemma-3}
Assume that $U,U'' \in L^{\infty}(\mathbb{R})$ with $U(y) + \beta/F > 0$ and $U''(y) \geq 0$
for every $y \in \mathbb{R}$. Then, the spectral problem (\ref{Lop}) admits no eigenvalues $c$ with $c < -\beta/F$.
\end{lemma}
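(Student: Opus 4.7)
The plan is to parallel the strategy of Lemma~\ref{lemma-2} but to exploit the convexity hypothesis $U''\geq 0$ directly, thereby bypassing the quotient-rule/divergence-form manipulation that was used to redistribute the term $U''\phi$ in Lemma~\ref{lemma-2} and that required enough regularity to rewrite the left-hand side in Sturm--Liouville form. First I would set $c=-\beta/F+\gamma$ with $\gamma<0$ and rewrite $\mathcal{L}(c)\phi=0$ as
\begin{equation*}
-(U-c)\phi'' + U''\phi = F\gamma\,\phi.
\end{equation*}
Under the standing hypotheses $U(y)+\beta/F>0$ and $\gamma<0$, one has $U-c=(U+\beta/F)+(-\gamma)\geq -\gamma>0$, so $1/(U-c)$ is uniformly bounded on $\mathbb{R}$ and, together with $U''\in L^\infty(\mathbb{R})$, the coefficient $U''/(U-c)$ belongs to $L^\infty(\mathbb{R})$.

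Assuming an eigenfunction $\phi\in H^2(\mathbb{R})$ exists, I would then divide the rewritten equation by $U-c$, multiply by $\phi$, and integrate over $\mathbb{R}$. Integration by parts on the $\phi\phi''$ term (with vanishing boundary contributions, since $\phi\in H^2(\mathbb{R})$ forces both $\phi$ and $\phi'$ to tend to zero at $\pm\infty$) produces the identity
\begin{equation*}
\int_{\mathbb{R}} (\phi')^2\,dy \;+\; \int_{\mathbb{R}} \frac{U''}{U-c}\,\phi^2\,dy \;=\; F\gamma\int_{\mathbb{R}} \frac{\phi^2}{U-c}\,dy.
\end{equation*}
All three integrals are finite thanks to $\phi\in H^2(\mathbb{R})$, the uniform positive lower bound on $U-c$, and the $L^\infty$ bound on $U''$.

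The contradiction would then be immediate from sign comparison. The convexity assumption $U''\geq 0$ combined with $U-c>0$ makes the second integral on the left nonnegative, so the entire left-hand side is strictly positive whenever $\phi$ is nontrivial. On the right, $F\gamma<0$ and $\int_{\mathbb{R}}\phi^2/(U-c)\,dy>0$, so the right-hand side is strictly negative. No eigenfunction can reconcile these signs, hence no eigenvalue $c<-\beta/F$ exists.

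I do not anticipate a serious obstacle: the only steps requiring any care are the vanishing of the boundary terms in the integration by parts and the integrability of $U''\phi^2/(U-c)$, both of which follow at once from $\phi\in H^2(\mathbb{R})$, $U''\in L^\infty(\mathbb{R})$, and the lower bound on $U-c$. Conceptually, the convexity hypothesis plays the role that the Sturm--Liouville rewriting played in Lemma~\ref{lemma-2}: it supplies the needed positivity directly from the sign of $U''$, so that the argument goes through under the weaker regularity $U,U''\in L^\infty(\mathbb{R})$ rather than the smoothness assumption used there.
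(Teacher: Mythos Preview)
Your proposal is correct and is essentially the paper's own argument: dividing the equation by $U-c$, multiplying by $\phi$, integrating by parts, and comparing signs. The only cosmetic difference is that the paper keeps the $U''$ contribution on the right-hand side together with the $cF+\beta$ term rather than moving it to the left, but the resulting identity and the sign contradiction are identical.
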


\begin{proof}
Under assumptions of the lemma, we rewrite the spectral problem (\ref{Lop})
in another equivalent form
$$
- \phi'' = \frac{cF + \beta - U''}{U-c} \phi.
$$
It follows again from the first Green's identity that
$$
\int_{\mathbb{R}} (\phi')^2 dy = \int_{\mathbb{R}} \frac{cF + \beta - U''}{U-c} \phi^2 dy,
$$
where the left-hand side is positive, whereas the right-hand side is negative and well-defined under the assumptions
of the lemma. The contradiction excludes eigenvalues with $c < -\beta/F$.
\end{proof}

Because of the negative results in Lemmas \ref{lemma-2} and \ref{lemma-3}, we have to consider
piecewise smooth configuration $U$. In particular, we consider examples of symmetric localized jets on a constant mean flow of the form
\begin{align}
U(y) = \bar U \left[ 1- a \exp(-b |y|) \right], \quad y \in \mathbb{R},
\label{e.Usym}
\end{align}
with positive parameters $\bar{U}$, $a$, and $b$,
and asymmetric localized jets of the form
\begin{align}
U(y)=
\begin{cases}
\bar U_- \left[ 1- a_- \exp(b_- y) \right] & y < 0,\\
\bar U_+ \left[ 1- a_+ \exp(-b_+ y) \right] & y > 0,
\end{cases}
\label{e.Uasym}
\end{align}
where $\bar{U}_{\pm}$, $a_{\pm}$, and $b_{\pm}$ are positive parameters. To assure the continuity of
$U$ and $U''$ at $y=0$ for (\ref{e.Uasym}) we require $\bar{U}_- (1-a_-) = \bar{U}_+ (1-a_+)$ and $b_-=b_+\sqrt{a_-U_-/a_+U_+}$.

Both the symmetric and asymmetric flows (\ref{e.Usym}) and (\ref{e.Uasym})
exhibit a discontinuity of their first derivative at $y=0$, which causes the leading order potential vorticity gradient
in $q^{(0)}_y= \beta + F U - U''$ to have a $\delta$-function singularity. It is standard to consider the linear problem
for the regions $y<0$ and $y>0$ separately and to treat $y=0$ as a boundary, see 
\cite[Chapter 9.3.3]{Buhler} and \cite[Chapter 9.2]{Vallis}. After removal of 
the $\delta$-function singularity, $U''$ is continuous across the point $y = 0$. For such mean flows with a jump in the ambient potential vorticity at $y=0$,
the conditions of Lemma \ref{lemma-2} and \ref{lemma-3} are not satisfied and
eigenvalues $c$ of the spectral problem (\ref{Lop}) below the wave continuum may exist with $c < -\beta/F$.

An example for a symmetric mean flow of the form (\ref{e.Usym})
with $\bar U = 5$, $a=0.7$ and $b=0.8$ is shown in the left panel of Figure~\ref{fig.Usym}.
There exists a simple isolated eigenvalue $c_0 \approx -1.73$ for $\beta=1$ and $F=1$ with the eigenfunction $\phi_0$ shown in the right panel.
The left panel of Figure~\ref{fig.Uasym} presents an example for an asymmetric mean flow of the form (\ref{e.Uasym})
with $\bar U_-=2$, $a_-=0.3$, $b_-=0.4$ and $a_+=0.8$, implying $\bar U_+=7$ and $b_+=0.13$. The simple isolated
eigenvalue is located at $c_0 \approx -1.03$ for $\beta=1$ and $F=1$ with the eigenfunction $\phi_0$ shown in the right panel.


%
\begin{figure}[h]
	\begin{tabular}{ll}
		\includegraphics[width = 0.45\columnwidth]{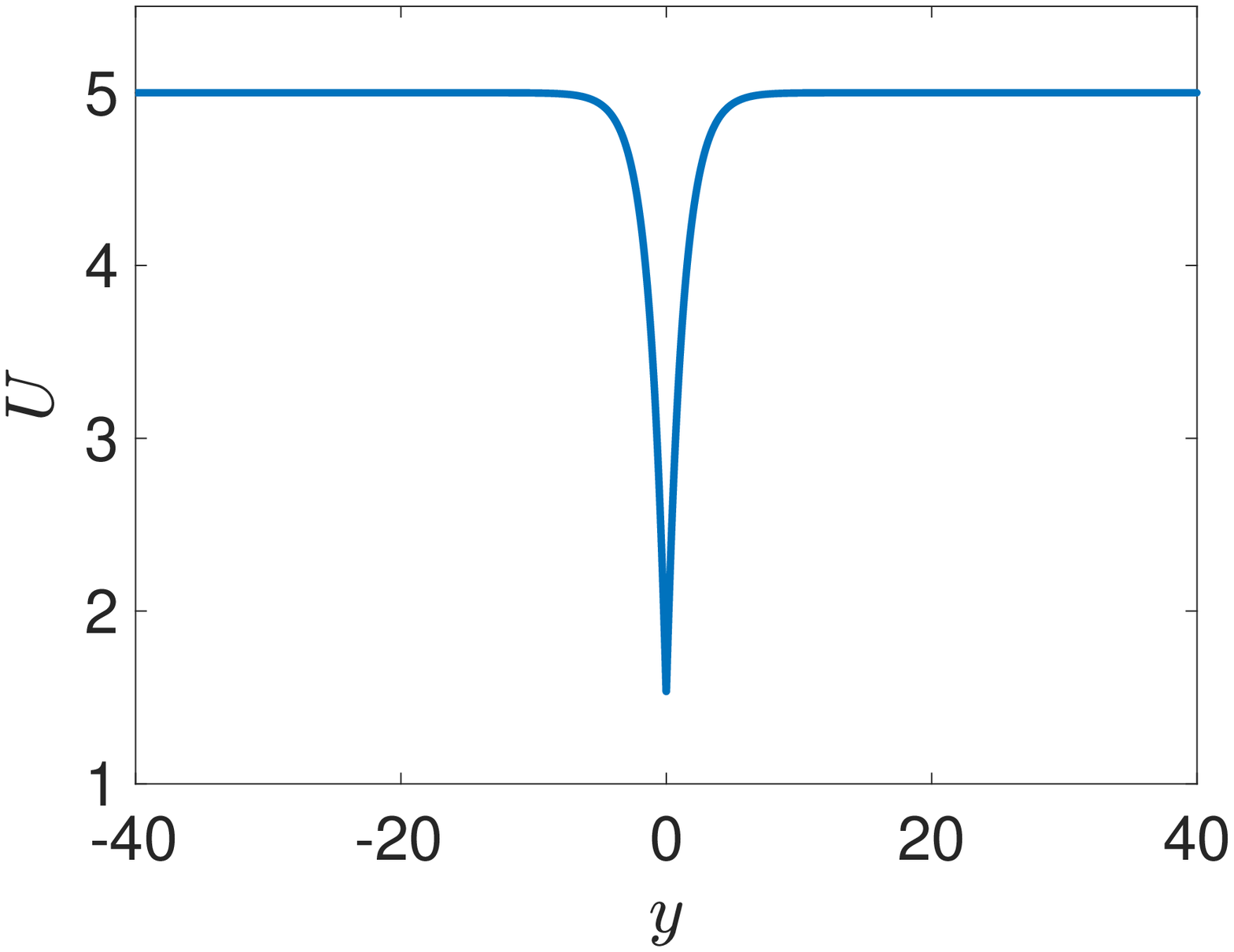}
		\includegraphics[width = 0.45\columnwidth]{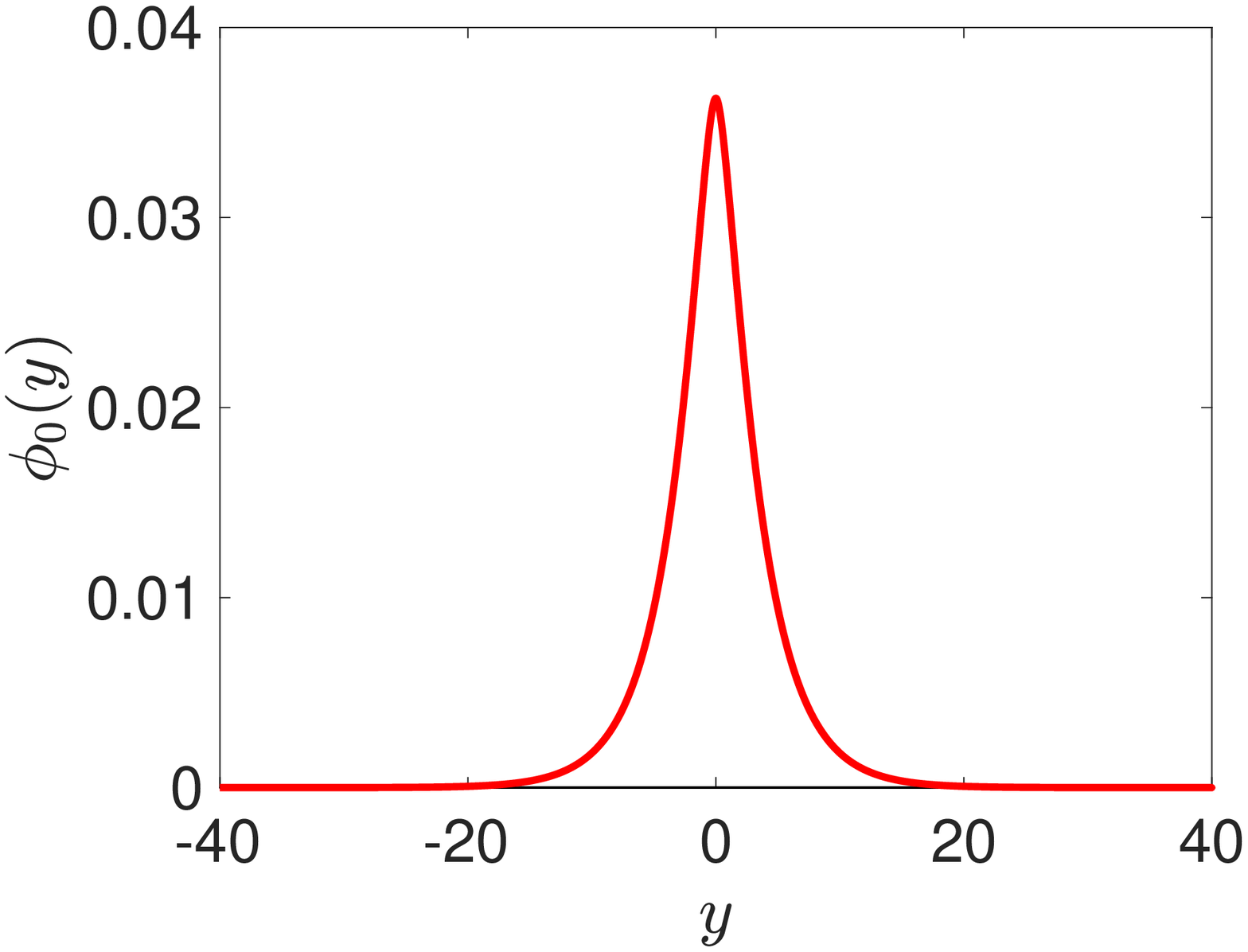}
	\end{tabular}
	\caption{Symmetric mean flow profile $U$ in (\ref{e.Usym}) with $\bar U = 5$, $a=0.7$ and $b=0.8$ for $\beta=1$ and $F=1$ (left)
and the localized eigenfunction $\phi_0$ (right) for the smallest eigenvalue $c_0 \approx -1.73$.}
	\label{fig.Usym}
\end{figure}

\begin{figure}[h]
	\begin{tabular}{ll}
		\includegraphics[width = 0.45\columnwidth]{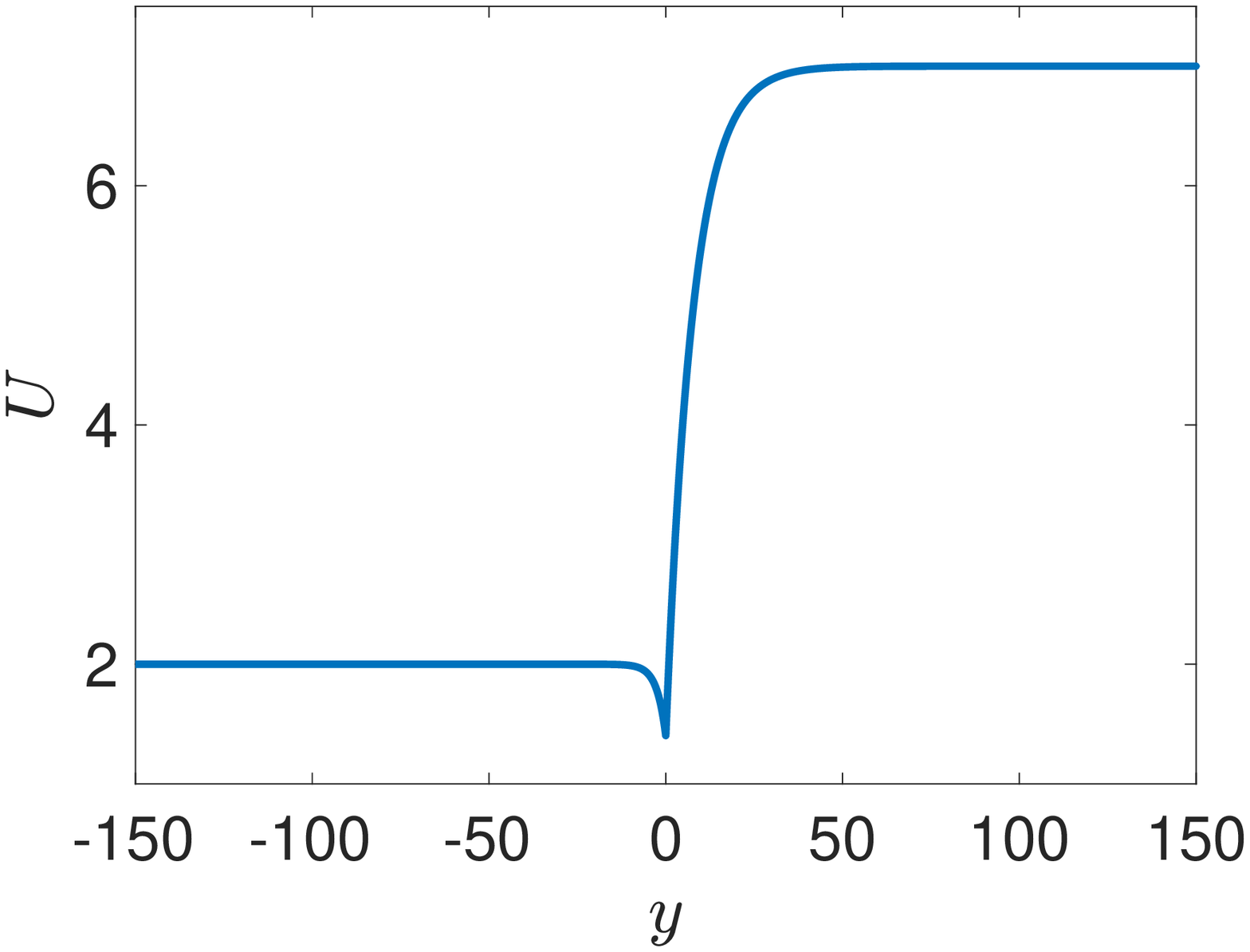}
		\includegraphics[width = 0.45\columnwidth]{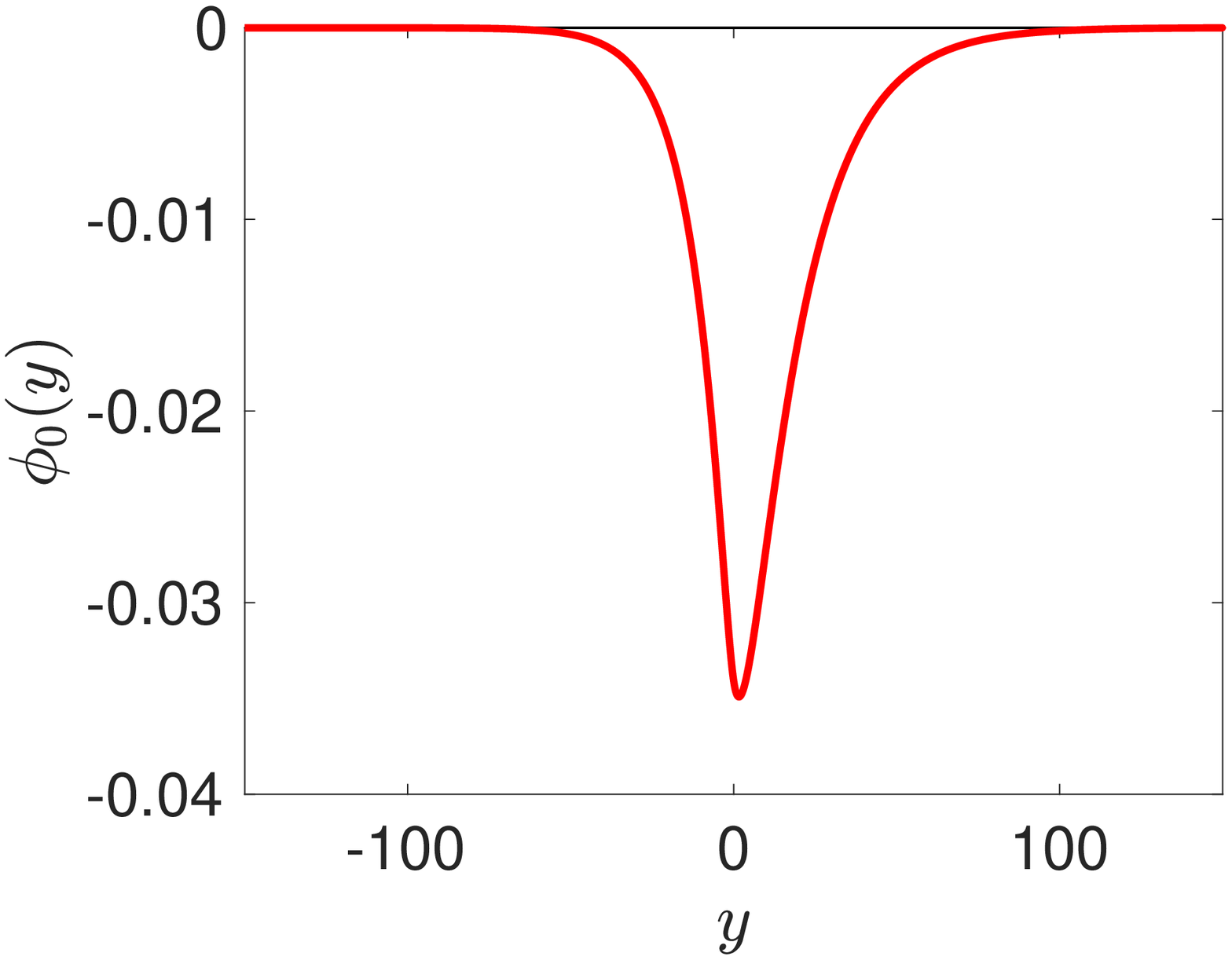}
	\end{tabular}
	\caption{Asymmetric mean flow profile $U$ in (\ref{e.Uasym}) with $\bar U_-=2$, $a_-=0.3$, $b_-=0.4$ and $\bar U_+=7$, $a_+=0.8$, $b_+=0.13$ for $\beta=1$ and $F=1$ (left) and the localized eigenfunction $\phi_0$ (right) for the smallest eigenvalue $c_0 \approx -1.03$.}
	\label{fig.Uasym}
\end{figure}


\section{Derivation of solitary Rossby wave equations}
\label{sec.amplitudeeqns}

Here we derive an evolution equation describing the dynamics of localized pulses
with amplitude $A(X,Y,T)$ on large spatial scales $X,Y$ and long time scales $T$
in the weakly nonlinear and slowly varying reduction of the barotropic equation (\ref{qgp}) 
in an unbounded domain.
We show in Section \ref{sec.KdV} that the classical KdV and ZK equations cannot be derived
and the dynamics of the amplitude $A$ is governed by the linear wave equation
\begin{align}
\label{A_lin}
A_T = \lambda \, A_{XXX},
\end{align}
where $\lambda$ is a numerical constant. We then show in Section \ref{sec.mKdV}
that the dynamics remains linear according to (\ref{A_lin}) even when including higher-order cubic nonlinear terms.


\subsection{KdV and ZK equations}
\label{sec.KdV}

In addition to the fast meridional variable $y$ over which the mean flow $U$ changes,
we introduce the long spatial and slow temporal scales:
\begin{align}
\label{tscaling}
X = \epsilon (x - c_0 t),\hspace{0.7cm}
Y = \epsilon y,\hspace{0.7cm}
T = \epsilon^3 t,\hspace{0.7cm}
\end{align}
where the limiting speed $c_0$ coincides with a simple isolated eigenvalue
of the spectral problem (\ref{Lop}). Derivative terms in the quasi-geostrophic potential vorticity equation (\ref{qgp}) are expanded as follows
\begin{eqnarray*}
\partial_t + U \partial_x = \epsilon (U-c_0) \partial_X + \epsilon^3 \partial_T, 
\end{eqnarray*}
\begin{eqnarray*}
\nabla^2 = \partial_y^2 + 2 \epsilon \partial_y \partial_Y + \epsilon^2 (\partial_X^2 + \partial_Y^2),
\end{eqnarray*}
and
\begin{eqnarray*}
J(a,b) = \epsilon (a_X b_y - a_y b_X) + \epsilon^2 (a_X b_Y - a_Y b_X).
\end{eqnarray*}
We seek an asymptotic solution of the form
\begin{align}
\tilde\psi = \epsilon^2 \psi^{(2)} + \epsilon^3\psi^{(3)} + \epsilon^4  \psi^{(4)} + \cdots, \quad
\label{asympt-exp}
\end{align}
with the leading-order perturbation stream function
\begin{align}
\label{psi-2}
\psi^{(2)} = A(\epsilon(x-c_0t),\epsilon y,\epsilon^3t) \, \phi_0(y),
\end{align}
where $\phi_0 \in H^2(\mathbb{R})$ is the eigenfunction of the spectral problem (\ref{Lop}) for a simple isolated
eigenvalue $c_0$ such that $\mathcal{L}(c_0) \phi_0 = 0$. For the asymptotic expansion (\ref{asympt-exp}) to be a solution of the quasi-geostrophic potential vorticity equation (\ref{qgp}), each term in the asymptotic expansion (\ref{asympt-exp}) must  decay to zero as $|y| \to \infty$. By substituting (\ref{asympt-exp}) into (\ref{qgp}) and using the expansions
in terms of slow variables (\ref{tscaling}), we obtain a sequence of equations at orders of $\mathcal{O}(\epsilon^k)$ with $k \geq 3$.
The choice of $\psi^{(2)}$ in (\ref{psi-2}) satisfies the equation at $\mathcal{O}(\epsilon^3)$.
At the next order, $\mathcal{O}(\epsilon^4)$, we obtain the linear inhomogeneous equation
\begin{align}
\label{psiin1ZK-ZK}
  \mathcal{L}(c_0) \partial_X \psi^{(3)} = - 2(U-c_0) A_{XY} \phi_0^\prime,
\end{align}
which can be solved explicitly to yield
\begin{align}
\psi^{(3)} = -y\phi_0(y) A_Y(X,Y,T).
\label{e-psi1ZK2-ZK}
\end{align}
At the order $\mathcal{O}(\epsilon^5)$ we obtain the linear inhomogeneous equation
\begin{align}
\mathcal{L}(c_0) \partial_X \psi^{(4)} &=
\partial_T A (F - \partial_y^2) \phi_0 + 2(U-c_0)\left(y\phi_0\right)^\prime A_{XYY}
-(A_{XXX}+A_{XYY})(U-c_0)\phi_0 \nonumber \\
&- AA_X (\phi_0\phi_0^{\prime\prime\prime}-\phi_0^\prime\phi_0^{\prime\prime}).
\label{e-eps2ZK-ZK}
\end{align}
Let $\theta_0 \in L^2(\mathbb{R})$ be the eigenvector of the adjoint problem (\ref{LopAdj})
for the same eigenvalue $c_0$ such that $(U-c_0) \theta_0 \in H^2(\mathbb{R})$.
By Lemma \ref{lemma-1}, if $c_0$ is a simple eigenvalue, then $\langle \theta_0, (F - \partial_y^2) \phi_0 \rangle$ is nonzero
and we normalize $\theta_0$ such that $\langle \theta_0, (F - \partial_y^2) \phi_0 \rangle = 1$.
By Fredholm theory, there exists a solution $\psi^{(4)}$ to the linear inhomogeneous equation (\ref{e-eps2ZK-ZK})
decaying to zero as $|y| \to \infty$ if and only if the right-hand side of this equation
is orthogonal to $\theta_0$ in $L^2(\mathbb{R})$. This solvability condition yields
the evolution equation for the amplitude $A(X,Y,T)$ which is given by the following ZK equation
\begin{align}
\label{ZK11}
A_T = \mu A A_{X} + \lambda A_{XXX} + \zeta A_{XYY},
\end{align}
with the numerical coefficients given by the following inner products
\begin{align}
\label{parametersZKmu}
\mu &=\langle \theta_0, (\phi_0 \phi_0^{\prime\prime\prime}- \phi_0^\prime \phi_0^{\prime\prime}) \rangle, \\
\label{parametersZKlambda}
\lambda &= \langle \theta_0, (U-c_0)\phi_0 \rangle, \\
\label{parametersZKzeta}
\zeta &= \langle \theta_0, (U-c_0) (\phi_0 - 2 \left(y\phi_0\right)^\prime) \rangle.
\end{align}

We now show that $\zeta=0$, hence the ZK equation cannot be derived as a two-dimensional extension to the KdV equation.
By using $\mathcal{L}(c_0)(y\phi_0)=2\left(U-c_0\right) \phi_0'$, we write (\ref{parametersZKzeta}) in the form
\begin{align*}
\zeta &= - \langle \theta_0, (U-c_0) (\phi_0 + 2 y\phi_0^\prime) \rangle\\
&=- \langle \theta_0, (U-c_0) \phi_0 \rangle - \langle y\theta_0, \mathcal{L}(c_0)(y\phi_0)\rangle.
\end{align*}
Using $\mathcal{L}^\star(c_0)(y\theta_0)=2\partial_y\left( \left(U-c_0\right)\theta_0\right)$ we obtain after partial integration
that $\zeta = -\zeta$ implying $\zeta=0$.

We next show that in addition $\mu=0$ which precludes the role of the KdV equation to describe localized large scale perturbations.
For symmetric mean flow profiles $U$, the linear problems (\ref{Lop}) and (\ref{LopAdj})  support even eigenfunctions
$\phi_0$ and $\theta_0$. This implies that the integrand in (\ref{parametersZKmu}) is odd so that $\mu=0$ for
symmetric flow profiles. For asymmetric mean flow profiles, we only have numerical evidence for $\mu=0$.
In Figure~\ref{fig.mu} we plot $|\mu|$ versus the number of spatial grid points $N$ and show that 
$\mu\to 0$ as $N \to \infty$ for an asymmetric flow profile on Figure~\ref{fig.Uasym} with $|\mu|\sim1/\sqrt{N}$.

Hence the dynamics of the amplitude $A(X,Y,T)$ 
is entirely linear with zonal dispersion only, and is described by the linear dispersive wave equation (\ref{A_lin}) 
with $\lambda$ given by (\ref{parametersZKlambda}). For the mean flow depicted in Figure~\ref{fig.Usym} we obtain $\lambda \approx -4.69$. 
For the asymmetric mean flow depicted in Figure~\ref{fig.Uasym} we obtain $\lambda=-4.16$.

\begin{figure}[h]
\begin{center}
		\includegraphics[width = 0.45\columnwidth]{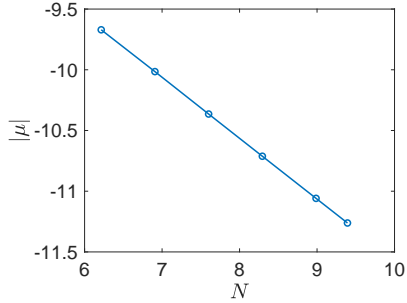}
		\end{center}
	\caption{Log-log plot of $|\mu|$ for increasing resolution with number of spatial grid points $N$ for the asymmetric mean flow depicted in Figure~\ref{fig.Uasym}. The slope is estimated as $-0.5$ from linear regression.}
	\label{fig.mu}
\end{figure}


\subsection{Modified KdV and ZK equations}
\label{sec.mKdV}

Since $\mu = 0$ in the classical ZK equation (\ref{ZK11}), 
we should redefine the scaling of the asymptotic expansion (\ref{asympt-exp}) and derive a modified ZK equation
with cubic nonlinear terms, analogously to \cite{Redekopp77} for the case of the KdV and modified KdV equations. 
We consider the same scaling (\ref{tscaling})
of the slow variables and redefine the asymptotic expansion in the form
\begin{align}
\label{asympt-exp-mod}
\tilde\psi = \epsilon \psi^{(1)} + \epsilon^2\psi^{(2)} + \epsilon^3  \psi^{(3)}  + \cdots,
\end{align}
with the leading-order perturbation stream function
\begin{align}
\label{psi-1}
\psi^{(1)} = A(\epsilon(x-c_0t),\epsilon y, \epsilon^3t) \, \phi_0(y),
\end{align}
where $(c_0,\phi_0)$ are the same as in (\ref{psi-2}) and the corrections of the asymptotic expansion (\ref{asympt-exp-mod}) are still
supposed to decay to zero as $|y| \to \infty$. By substituting (\ref{asympt-exp-mod}) into the quasi-geostrophic potential vorticity equation (\ref{qgp}) and using the expansions
in terms of the slow variables (\ref{tscaling}), we obtain a sequence of equations at orders of $\mathcal{O}(\epsilon^k)$ with $k \geq 2$.
The choice of $\psi^{(1)}$ in (\ref{psi-1}) satisfies the equation at $\mathcal{O}(\epsilon^2)$.
At the next order, $\mathcal{O}(\epsilon^3)$, we obtain the linear inhomogeneous equation
\begin{align}
\label{psiin1mZK}
  \mathcal{L}(c_0) \partial_X \psi^{(2)} = - 2(U-c_0) A_{XY} \phi_0^\prime - A A_X (\phi_0\phi_0^{\prime\prime\prime}-\phi_0^\prime\phi_0^{\prime\prime}).
\end{align}
The explicit solution in (\ref{e-psi1ZK2-ZK}) would satisfy (\ref{psiin1mZK}) if only the first term in the right-hand side
were present. However, we are not able to find an explicit solution
for the full linear equation (\ref{psiin1mZK}). Therefore, we represent
the solution formally as
\begin{align}
\psi^{(2)} = -y\phi_0(y) A_Y(X,Y,T) - \frac{1}{2} \phi_2(y) A(X,Y,T)^2,
\label{e-psi1mZK2}
\end{align}
where $\phi_2$ is a solution of the inhomogeneous equation
\begin{align}
\mathcal{L}(c_0) \phi_2 = \phi_0 \phi_0^{\prime\prime\prime}- \phi_0^\prime \phi_0^{\prime\prime}.
\label{e-psi1mZK2}
\end{align}
A solution $\phi_2 \in H^2(\mathbb{R})$ to this equation exists by Fredholm theory thanks to the constraint $\mu = 0$. 
To make the solution unique, we introduce the
orthogonality condition $\langle \phi_0, \phi_2 \rangle = 0$ on the correction $\phi_2$.

At the order $\mathcal{O}(\epsilon^4)$ we obtain the linear inhomogeneous equation
\begin{align}
\mathcal{L}(c_0) \partial_X \psi^{(3)} &=
\partial_T A (F - \partial_y^2) \phi_0 + 2(U-c_0)\left(y\phi_0\right)^\prime A_{XYY}
-(A_{XXX}+A_{XYY})(U-c_0)\phi_0 \nonumber \\
&+ \partial_X (AA_Y) \left[ 2 (U-c_0) \phi_2^\prime
+ y\left( \phi_0\phi_0^{\prime\prime\prime}-\phi_0^\prime\phi_0^{\prime\prime} \right) \right] \nonumber \\
&+\frac{1}{2}A^2A_X
\left[ \phi_0\phi_2^{\prime\prime\prime}  +2 \phi_0^{\prime\prime\prime} \phi_2 - 2\phi_0^\prime\phi_2^{\prime\prime} - \phi_0^{\prime\prime}\phi_2^\prime
\right].
\label{e-eps2mZK}
\end{align}
As in Section \ref{sec.KdV}, the solvability condition for $\psi^{(3)}$ in (\ref{e-eps2mZK})
yields the evolution equation for the amplitude $A(X,Y,T)$ which is given by the following
modified ZK equation
\begin{align}
\label{ZK11-mod}
A_{T} = \kappa (AA_Y)_X + \nu A^2 A_X + \lambda A_{XXX} + \zeta A_{XYY},
\end{align}
where $\lambda$ and $\zeta$ are the same as in (\ref{parametersZKlambda}) and (\ref{parametersZKzeta}), and $\kappa$ and $\nu$ are defined by
\begin{align}
\label{parametersZK-mod}
\kappa &=- \langle \theta_0, \left[ 2 (U-c_0) \phi_2^\prime + y\left(\phi_0 \phi_0^{\prime\prime\prime}- \phi_0^\prime \phi_0^{\prime\prime}\right)
\right] \rangle, \\
\nu &= -\frac{1}{2} \langle \theta_0, \left[ \phi_0\phi_2^{\prime\prime\prime}  +2 \phi_0^{\prime\prime\prime} \phi_2
- 2\phi_0^\prime\phi_2^{\prime\prime} - \phi_0^{\prime\prime}\phi_2^\prime \right] \rangle.
\end{align}

As we now show, $\kappa=\nu=0$, and hence the amplitude equation is given again by the linear dispersive wave equation (\ref{A_lin}) (recall from Section \ref{sec.KdV} that $\zeta=0$). It is readily seen that $\kappa=0$ since it follows from (\ref{e-psi1mZK2}) that 
\begin{align*}
\kappa &=- 2\langle \theta_0, (U-c_0) \phi_2^\prime\rangle - \langle \mathcal{L}^\star(c_0)(y\theta_0),\phi_2\rangle\\
&=- 2\langle \theta_0, (U-c_0) \phi_2^\prime\rangle - 2 \langle \partial_y\left( \left(U-c_0\right)\theta_0\right),\phi_2\rangle=0.
\end{align*}
We show next that $\nu = 0$ as well. Recall that since $\phi_0^{\prime\prime\prime}$ is discontinuous at $y=0$ so is $\phi_2^{\prime\prime}$, and hence $\phi_2^{\prime\prime\prime}$ involves a $\delta$-function singularity. We therefore perform partial integration to allow for a computationally feasible expression of $\nu$ which requires splitting the integration into integration over $y<0$ and $y>0$. We obtain
\begin{align}
\label{nu_num}
\nu = -\frac{1}{2} \langle \theta_0, \left[ 2 \phi_0^{\prime\prime\prime} \phi_2
- 2\phi_0^\prime\phi_2^{\prime\prime} - \phi_0^{\prime\prime}\phi_2^\prime \right] \rangle +
\frac{1}{2} \langle \theta_0^\prime,  \phi_0\phi_2^{\prime\prime}\rangle +  \frac{1}{2} \langle \theta_0,  \phi_0^\prime\phi_2^{\prime\prime}\rangle -  \frac{1}{2} \left[\theta_0\phi_0\phi_2^{\prime\prime}\right]_{0-\epsilon}^{0+\epsilon},
\end{align}
for $\epsilon\to 0$. 
We were not able to show analytically that $\nu=0$ but have performed
careful numerical experiments for several mean flow configurations confirming $\nu=0$.
In Figure~\ref{fig.nu} we show $\nu$ versus the number of spatial grid points $N$ 
suggesting convergence $\nu\to 0$ as $N \to \infty$ with $\nu\sim 1/N$. 

\begin{figure}[h]
\begin{center}
		\includegraphics[width = 0.45\columnwidth]{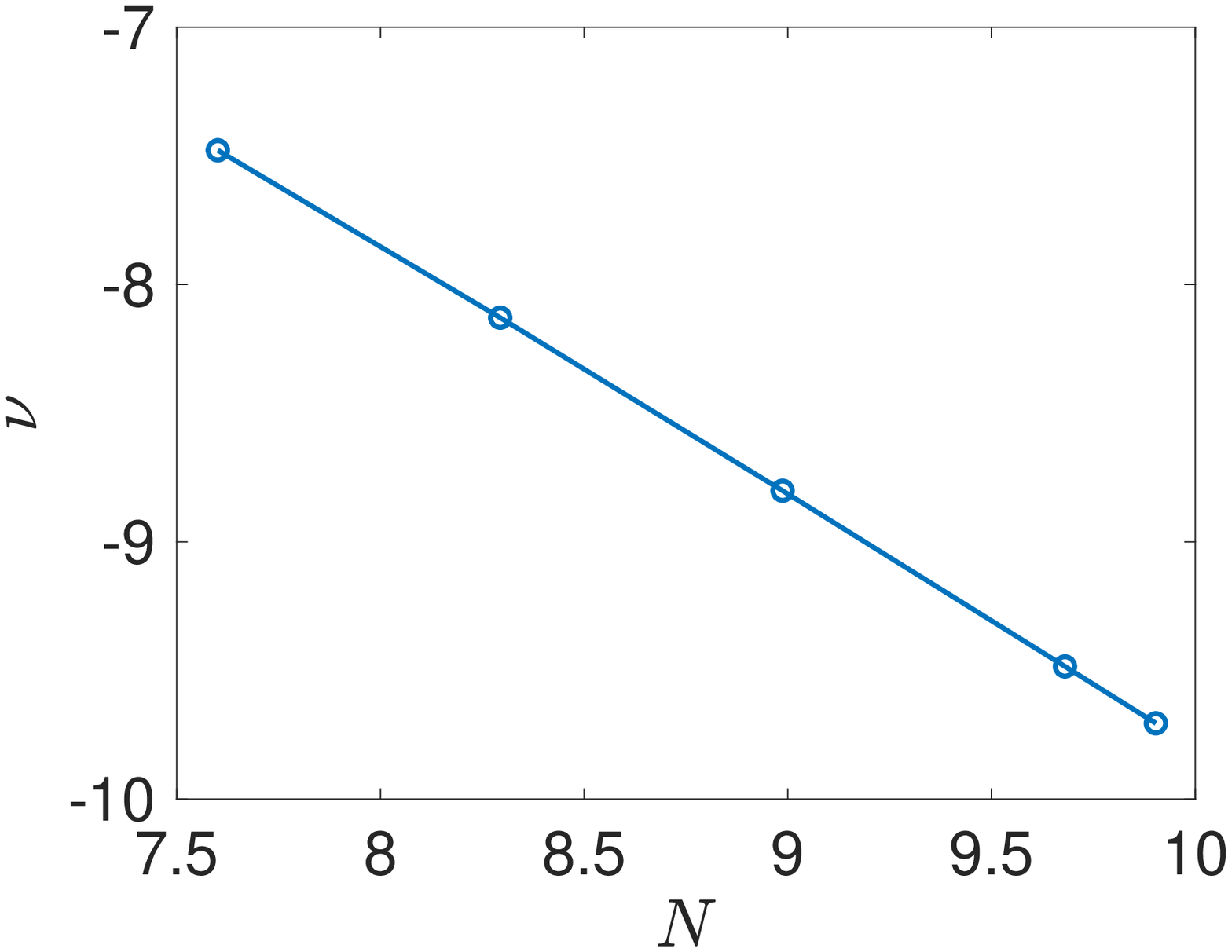}
		\includegraphics[width = 0.45\columnwidth]{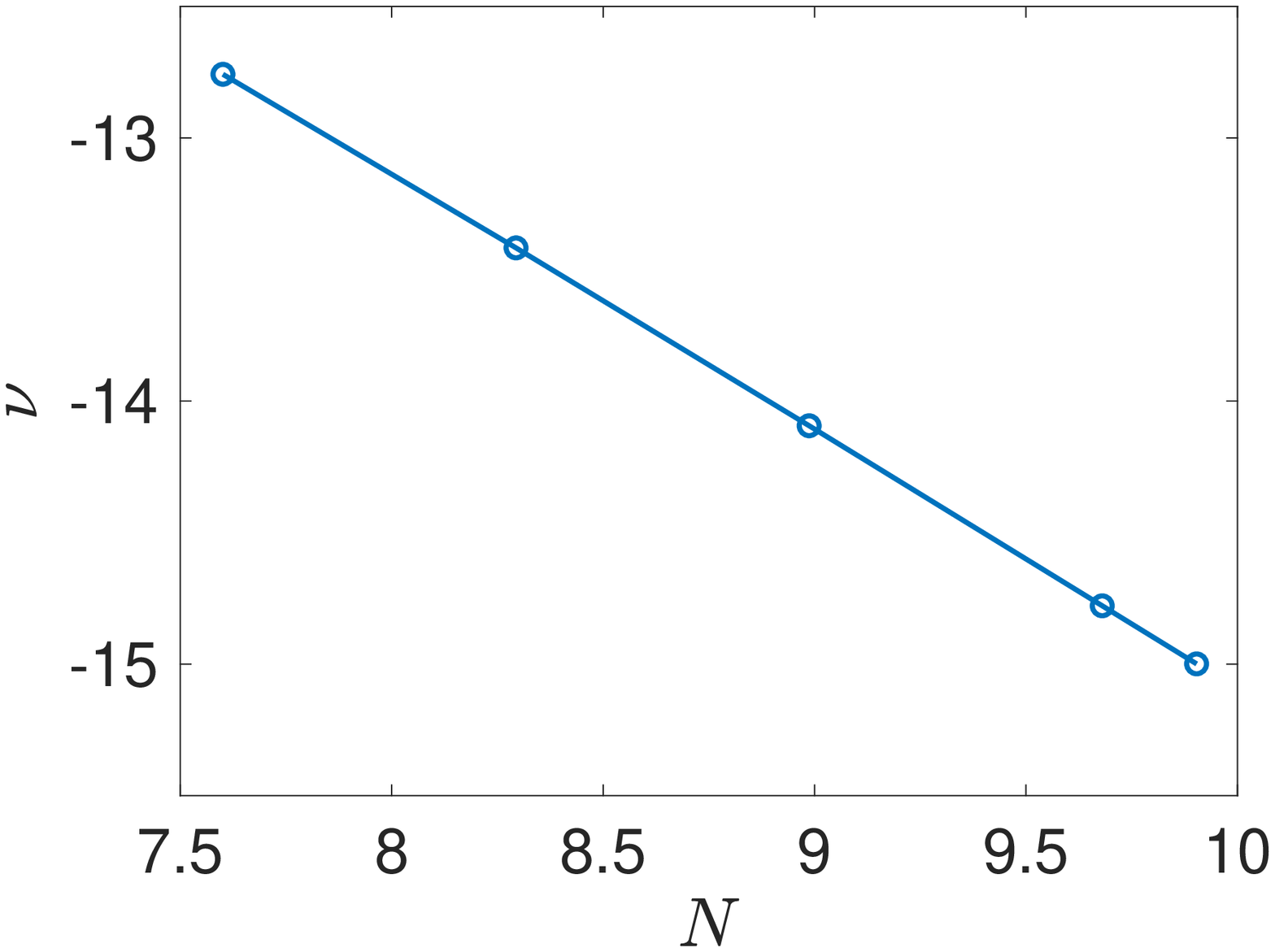}
		\end{center}
	\caption{Log-log plot of $\nu$, numerically calculated from (\ref{nu_num}), for increasing resolution with number of spatial grid points $N$. Left:  symmetric mean flow depicted in Figure~\ref{fig.Usym}. Right: Asymmetric mean flow depicted in Figure~\ref{fig.Uasym}. The slope is estimated as $-0.97$ from linear regression in both cases.}
	\label{fig.nu}
\end{figure}


\section{Numerical simulation of the quasi-geostrophic equations}
\label{sec.num}
Here we numerically integrate the quasi-geostrophic potential vorticity equation (\ref{qgp}) for localized initial conditions of the form
\begin{align}
\label{e.A0}
\tilde \psi(x,y,t=0) = A_0\sech^2(w x)\phi_0(y),
\end{align}
where $\phi_0$ is given as the normalized eigenfunction of the linear problem (\ref{Lop}) corresponding to the 
 isolated eigenvalue $c_0$. We shall use the symmetric mean flow profile (\ref{e.Usym}) with parameters as in Figure~\ref{fig.Usym} as well as the asymmetric mean flow profile (\ref{e.Uasym}) with parameters as in Figure~\ref{fig.Uasym}. We choose $A_0=0.11$ and $w=0.1$ for both mean flow profiles.

Numerical integration is based on the finite-difference scheme where the evolution  problem is split into firstly determing $\tilde\psi$ by solving the Helmholtz problem $(\nabla^2-F)\tilde\psi=q$ for given potential vorticity $q$ in spectral space, and then, in a second step, advecting the potential vorticity in time using a second-order leapfrog scheme \cite{Holland78}. The discretization of the nonlinearity is performed with the Arakawa scheme \cite{Arakawa66} which conserves energy and enstrophy. We choose periodic boundary conditions on a large domain to mimic an unbounded domain. We choose a time step of $\Delta t=0.01$ and a spatial discretization of $\Delta x=0.3$ with a square domain of length $L=900$ throughout. As before we choose $\beta=F=1$.

Figures~\ref{fig.QGUsym} and \ref{fig.QGaUsym} show snapshots of the stream function $\tilde\psi$ and 
 its cross-section at the latitude of the discontinuity in the mean flow $y=0$. It is clearly seen that the initially localized perturbation (\ref{e.A0}) linearly disperses along the zonal direction centered at $y=0$. We have tested this behaviour for several initial conditions as well as for several mean flow profiles.

\begin{figure}[h]
	\begin{tabular}{ll}
		\includegraphics[width = 0.45\columnwidth]{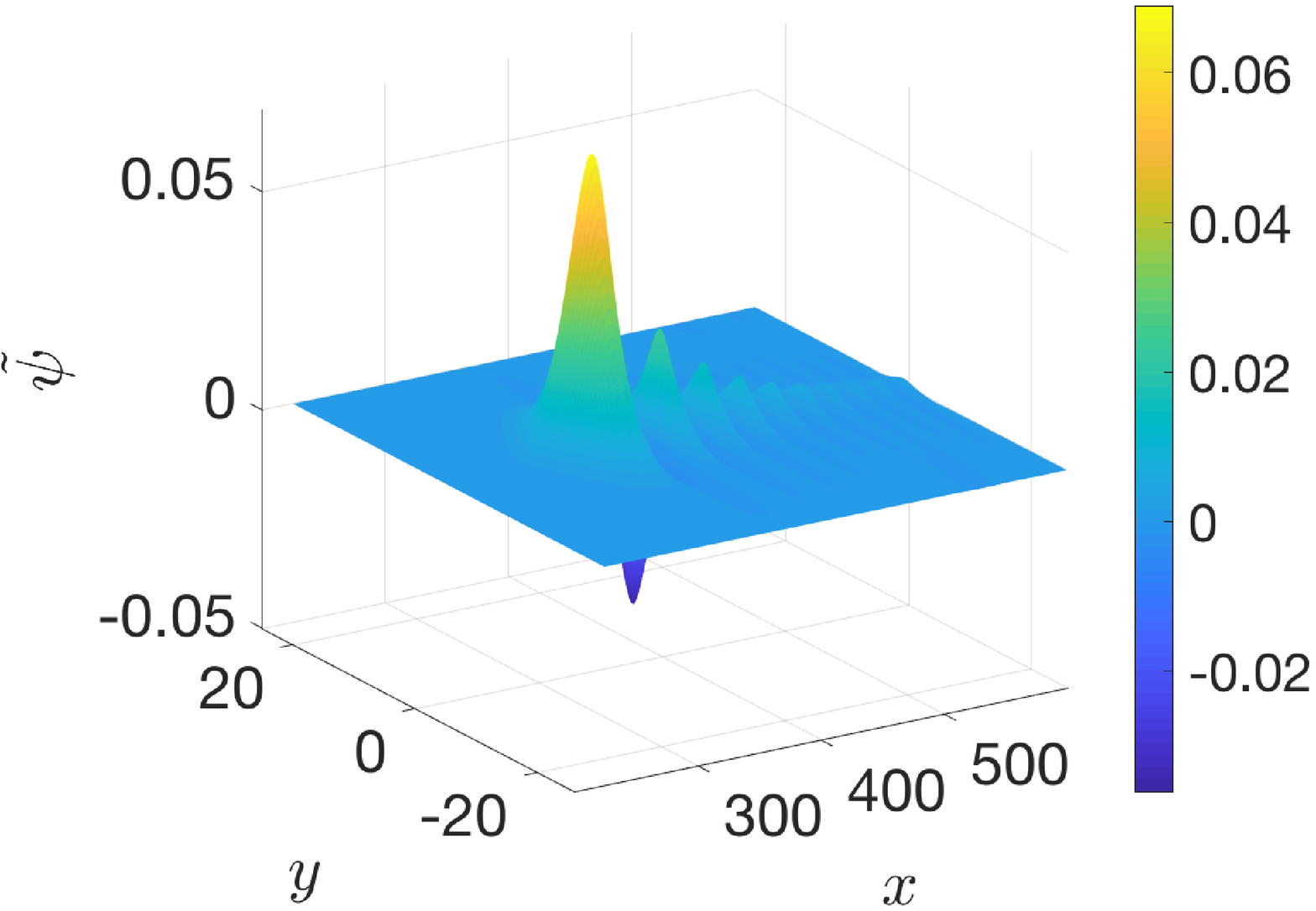}
		\includegraphics[width = 0.45\columnwidth]{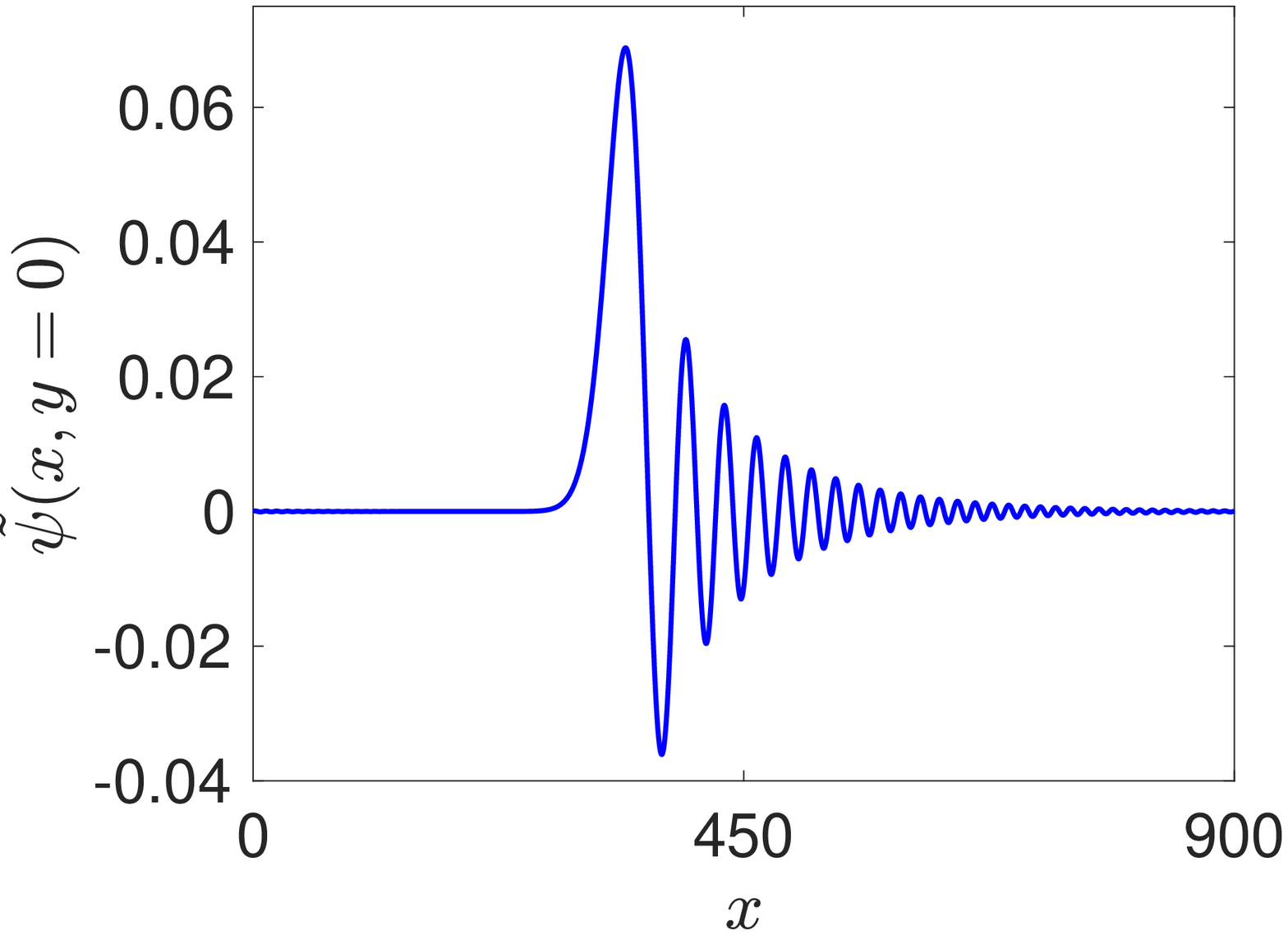}
	\end{tabular}
	\caption{Stream function $\tilde \psi(x,y,t=200)$ for the localized initial condition (\ref{e.A0}) with $A_0=0.11$ and $w=0.1$, for the symmetric mean flow depicted in Figure~\ref{fig.Usym}. Left: surface plot of the stream function. Right: Meridional cross-section at $y = 0$.}
	\label{fig.QGUsym}
\end{figure}

\begin{figure}[h]
	\begin{tabular}{ll}
		\includegraphics[width = 0.45\columnwidth]{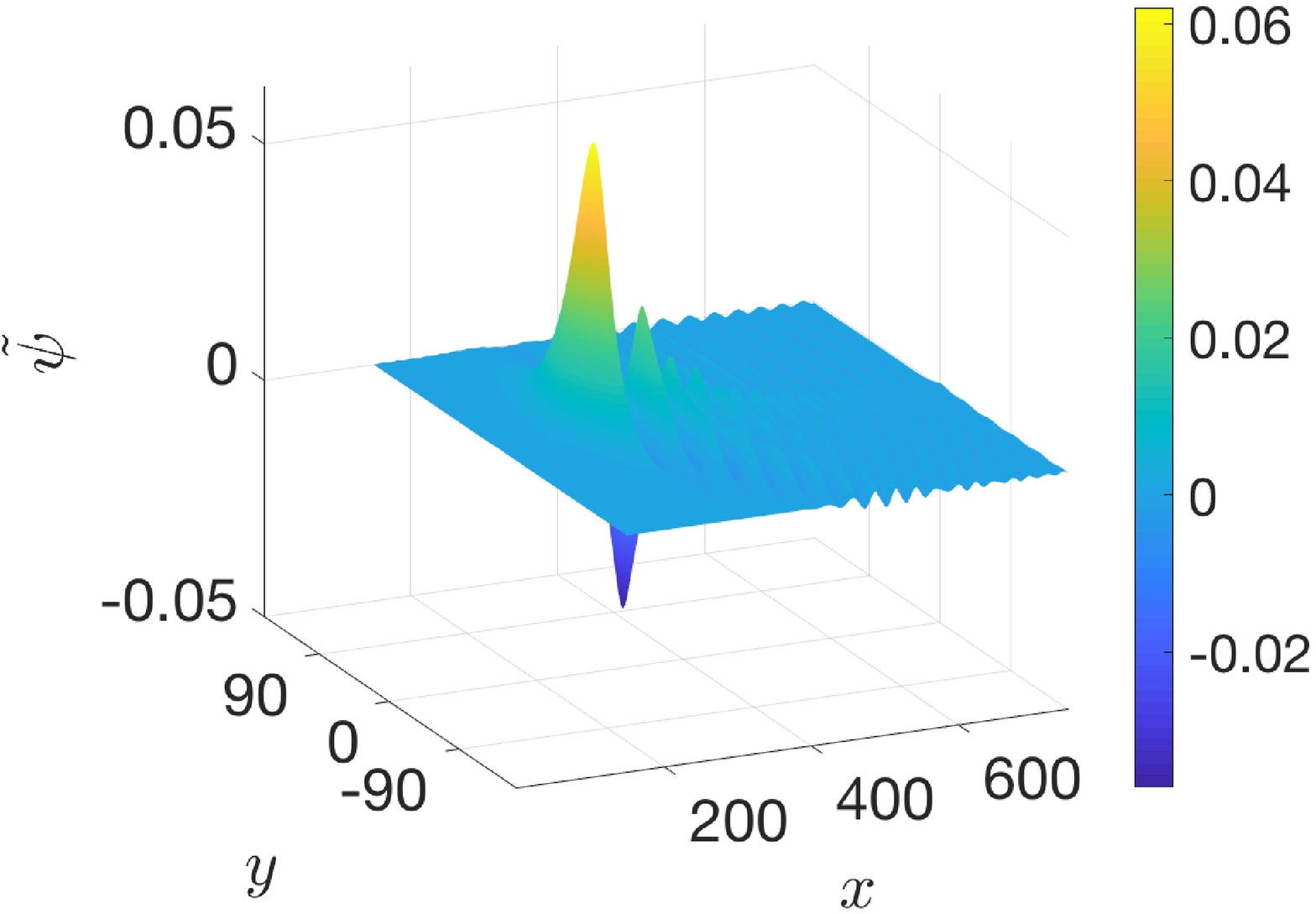}
		\includegraphics[width = 0.45\columnwidth]{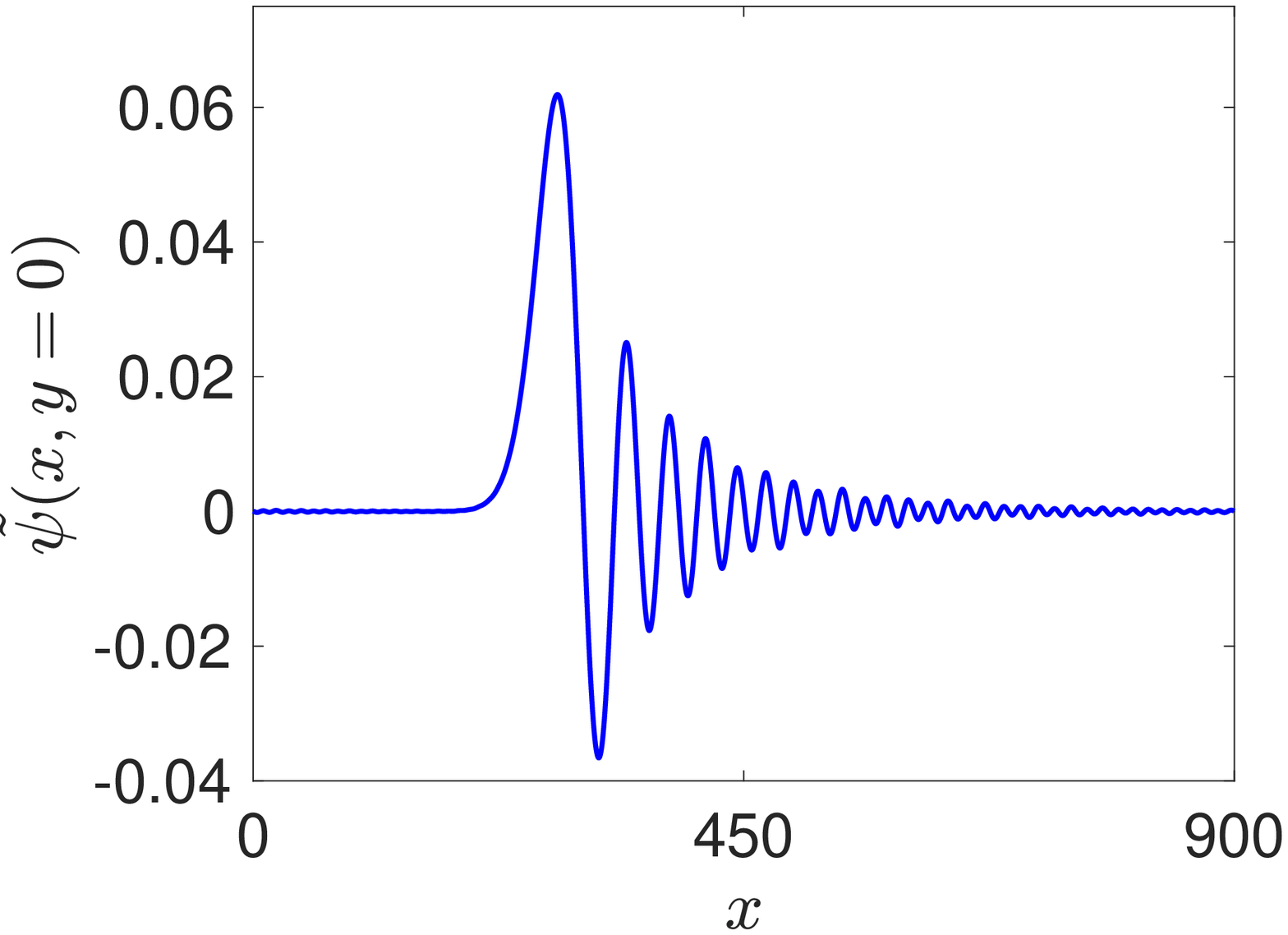}
	\end{tabular}
	\caption{Stream function $\tilde \psi(x,y,t=400)$ for the localized initial condition (\ref{e.A0}) with $A_0=0.11$ and $w=0.1$, for the asymmetric mean flow depicted in Figure~\ref{fig.Uasym}. Left: surface plot of the stream function. Right: Meridional cross-section at $y = 0$.}
	\label{fig.QGaUsym}
\end{figure}


\section{Discussion}
\label{sec.disc}

We have shown with a combination of rigorous analytical calculations and careful computational simulations 
that the dynamics of small localized large-scale perturbations in the quasi-geostrophic potential vorticity equation on unbounded domains is entirely linear. Moreover, we have shown that the dispersion is confined to
the zonal direction and does not spread meridionally. This renders the typical weakly
nonlinear wave equations such as the KdV and ZK equations and their higher-order modification obsolete in describing
coherent structures such as atmospheric blocking events, long lived eddies in the ocean
or coherent structures in the Jovian atmosphere such as the Great Red Spot. We remark though that the KdV and modified KdV equations can still be derived in meridionally bounded channels \cite{Redekopp77,GottwaldGrimshaw99,GottwaldGrimshaw99b}. 
Since the size of these meridionally bounded channels is small compared to the typical long wave length scale of the solitary wave, the ZK or the modified ZK equations, although suggested by the linear dispersion relation (\ref{e.dispZK}) in the long-wave limit, are therefore excluded as valid two-dimensional nonlinear wave models for large-scale slow localized structures.

\vspace{0.5cm}

\noindent
{\sc{Acknowledgments: }}
GAG would like to thank Oliver B\"uhler,
Daniel Daners, 
David Dritschel, Roger Grimshaw, Edgar Knobloch,
Victor Shrira and Vladimir Zeitlin for valuable discussions at various stages.
DEP acknowledges a financial support from the State task program in the sphere
of scientific activity of Ministry of Education and Science of the Russian Federation
(Task No. 5.5176.2017/8.9) and from the grant of President of Russian Federation
for the leading scientific schools (NSH-2685.2018.5).



\begin{thebibliography}{22}
\expandafter\ifx\csname natexlab\endcsname\relax\def\natexlab#1{#1}\fi
\expandafter\ifx\csname bibnamefont\endcsname\relax
  \def\bibnamefont#1{#1}\fi
\expandafter\ifx\csname bibfnamefont\endcsname\relax
  \def\bibfnamefont#1{#1}\fi
\expandafter\ifx\csname citenamefont\endcsname\relax
  \def\citenamefont#1{#1}\fi
\expandafter\ifx\csname url\endcsname\relax
  \def\url#1{\texttt{#1}}\fi
\expandafter\ifx\csname urlprefix\endcsname\relax\def\urlprefix{URL }\fi
\providecommand{\bibinfo}[2]{#2}
\providecommand{\eprint}[2][]{\url{#2}}

\bibitem[{\citenamefont{Vallis}(2006)}]{Vallis}
\bibinfo{author}{\bibfnamefont{G.~K.} \bibnamefont{Vallis}},
  \emph{\bibinfo{title}{Atmospheric and Oceanic Fluid Dynamics}}
  (\bibinfo{publisher}{Cambridge University Press},
  \bibinfo{address}{Cambridge, U.K.}, \bibinfo{year}{2006}).

\bibitem[{\citenamefont{Salmon}(1998)}]{Salmon}
\bibinfo{author}{\bibfnamefont{R.}~\bibnamefont{Salmon}},
  \emph{\bibinfo{title}{{Lectures on Geophysical Fluid Dynamics}}}
  (\bibinfo{publisher}{Oxford University Press}, \bibinfo{address}{New York},
  \bibinfo{year}{1998}).

\bibitem[{\citenamefont{Pedlosky}(1979)}]{Pedlosky}
\bibinfo{author}{\bibfnamefont{J.}~\bibnamefont{Pedlosky}},
  \emph{\bibinfo{title}{Geophysical Fluid Dynamics}}, Springer study edition
  (\bibinfo{publisher}{Springer Verlag}, \bibinfo{year}{1979}).

\bibitem[{\citenamefont{Charney}(1948)}]{Charney48}
\bibinfo{author}{\bibfnamefont{J.~G.} \bibnamefont{Charney}},
  \bibinfo{journal}{Geof. Publ.} \textbf{\bibinfo{volume}{17}},
  \bibinfo{pages}{251} (\bibinfo{year}{1948}).

\bibitem[{\citenamefont{Obukhov}(1949)}]{Obukhov49}
\bibinfo{author}{\bibfnamefont{A.~M.} \bibnamefont{Obukhov}},
  \bibinfo{journal}{Izv. AN SSSR Geograf. Geofiz.}
  \textbf{\bibinfo{volume}{13}}, \bibinfo{pages}{281} (\bibinfo{year}{1949}).

\bibitem[{\citenamefont{Hasegawa and Mima}(1978)}]{HasegawaMima78}
\bibinfo{author}{\bibfnamefont{A.}~\bibnamefont{Hasegawa}} \bibnamefont{and}
  \bibinfo{author}{\bibfnamefont{K.}~\bibnamefont{Mima}},
  \bibinfo{journal}{Geophys. Pub.} \textbf{\bibinfo{volume}{17}},
  \bibinfo{pages}{1} (\bibinfo{year}{1978}).

\bibitem[{\citenamefont{Maxworthy and Redekopp}(1976)}]{MaxworthyRedekopp76}
\bibinfo{author}{\bibfnamefont{T.}~\bibnamefont{Maxworthy}} \bibnamefont{and}
  \bibinfo{author}{\bibfnamefont{L.}~\bibnamefont{Redekopp}},
  \bibinfo{journal}{Icarus} \textbf{\bibinfo{volume}{29}}, \bibinfo{pages}{261
  } (\bibinfo{year}{1976}).

\bibitem[{\citenamefont{Redekopp}(1977)}]{Redekopp77}
\bibinfo{author}{\bibfnamefont{L.~G.} \bibnamefont{Redekopp}},
  \bibinfo{journal}{J. Fluid Mech.} \textbf{\bibinfo{volume}{82}},
  \bibinfo{pages}{725} (\bibinfo{year}{1977}).

\bibitem[{\citenamefont{Patoine and Warn}(1982)}]{PatoineWarn82}
\bibinfo{author}{\bibfnamefont{A.}~\bibnamefont{Patoine}} \bibnamefont{and}
  \bibinfo{author}{\bibfnamefont{T.}~\bibnamefont{Warn}},
  \bibinfo{journal}{Journal of the Atmospheric Sciences}
  \textbf{\bibinfo{volume}{39}}, \bibinfo{pages}{1018} (\bibinfo{year}{1982}).

\bibitem[{\citenamefont{Warn and Brasnett}(1983)}]{WarnBrasnett83}
\bibinfo{author}{\bibfnamefont{T.}~\bibnamefont{Warn}} \bibnamefont{and}
  \bibinfo{author}{\bibfnamefont{B.}~\bibnamefont{Brasnett}},
  \bibinfo{journal}{Journal of the Atmospheric Sciences}
  \textbf{\bibinfo{volume}{40}}, \bibinfo{pages}{28} (\bibinfo{year}{1983}).

\bibitem[{\citenamefont{Haines and
  Malanotte-Rizzoli}(1991)}]{HainesMalanotteRizzoli91}
\bibinfo{author}{\bibfnamefont{K.}~\bibnamefont{Haines}} \bibnamefont{and}
  \bibinfo{author}{\bibfnamefont{P.}~\bibnamefont{Malanotte-Rizzoli}},
  \bibinfo{journal}{Journal of the Atmospheric Sciences}
  \textbf{\bibinfo{volume}{48}}, \bibinfo{pages}{510} (\bibinfo{year}{1991}).

\bibitem[{\citenamefont{Malguzzi and
  Malanotte-Rizzoli}(1984)}]{MalguzziMalanotteRizzoli84}
\bibinfo{author}{\bibfnamefont{P.}~\bibnamefont{Malguzzi}} \bibnamefont{and}
  \bibinfo{author}{\bibfnamefont{P.}~\bibnamefont{Malanotte-Rizzoli}},
  \bibinfo{journal}{Journal of the Atmospheric Sciences}
  \textbf{\bibinfo{volume}{41}}, \bibinfo{pages}{2620} (\bibinfo{year}{1984}).

\bibitem[{\citenamefont{Malguzzi and
  Rizzoli}(1985)}]{MalguzziMalanotteRizzoli85}
\bibinfo{author}{\bibfnamefont{P.}~\bibnamefont{Malguzzi}} \bibnamefont{and}
  \bibinfo{author}{\bibfnamefont{P.~M.} \bibnamefont{Rizzoli}},
  \bibinfo{journal}{Journal of the Atmospheric Sciences}
  \textbf{\bibinfo{volume}{42}}, \bibinfo{pages}{2463} (\bibinfo{year}{1985}).

\bibitem[{\citenamefont{Mitsudera}(1994)}]{Mitsudera94}
\bibinfo{author}{\bibfnamefont{H.}~\bibnamefont{Mitsudera}},
  \bibinfo{journal}{Journal of the Atmospheric Sciences}
  \textbf{\bibinfo{volume}{51}}, \bibinfo{pages}{3137} (\bibinfo{year}{1994}).

\bibitem[{\citenamefont{Gottwald and
  Grimshaw}(1999{\natexlab{a}})}]{GottwaldGrimshaw99}
\bibinfo{author}{\bibfnamefont{G.}~\bibnamefont{Gottwald}} \bibnamefont{and}
  \bibinfo{author}{\bibfnamefont{R.}~\bibnamefont{Grimshaw}},
  \bibinfo{journal}{Journal of the Atmospheric Sciences}
  \textbf{\bibinfo{volume}{56}}, \bibinfo{pages}{3640}
  (\bibinfo{year}{1999}{\natexlab{a}}).

\bibitem[{\citenamefont{Gottwald and
  Grimshaw}(1999{\natexlab{b}})}]{GottwaldGrimshaw99b}
\bibinfo{author}{\bibfnamefont{G.}~\bibnamefont{Gottwald}} \bibnamefont{and}
  \bibinfo{author}{\bibfnamefont{R.}~\bibnamefont{Grimshaw}},
  \bibinfo{journal}{Journal of the Atmospheric Sciences}
  \textbf{\bibinfo{volume}{56}}, \bibinfo{pages}{3663}
  (\bibinfo{year}{1999}{\natexlab{b}}).

\bibitem[{\citenamefont{Helfrich and Pedlosky}(1993)}]{HelfrichPedlosky93}
\bibinfo{author}{\bibfnamefont{K.~R.} \bibnamefont{Helfrich}} \bibnamefont{and}
  \bibinfo{author}{\bibfnamefont{J.}~\bibnamefont{Pedlosky}},
  \bibinfo{journal}{Journal of Fluid Mechanics} \textbf{\bibinfo{volume}{251}},
  \bibinfo{pages}{377} (\bibinfo{year}{1993}).

\bibitem[{\citenamefont{Helfrich and Pedlosky}(1995)}]{HelfrichPedlosky95}
\bibinfo{author}{\bibfnamefont{K.~R.} \bibnamefont{Helfrich}} \bibnamefont{and}
  \bibinfo{author}{\bibfnamefont{J.}~\bibnamefont{Pedlosky}},
  \bibinfo{journal}{Journal of the Atmospheric Sciences}
  \textbf{\bibinfo{volume}{52}}, \bibinfo{pages}{1615} (\bibinfo{year}{1995}).

\bibitem[{\citenamefont{Zakharov and Kuznetsov}(1974)}]{ZakharovKuznetsov74}
\bibinfo{author}{\bibfnamefont{V.~E.} \bibnamefont{Zakharov}} \bibnamefont{and}
  \bibinfo{author}{\bibfnamefont{E.~A.} \bibnamefont{Kuznetsov}},
  \bibinfo{journal}{J. Sov. Phys. JETP} \textbf{\bibinfo{volume}{39}},
  \bibinfo{pages}{285} (\bibinfo{year}{1974}).

\bibitem[{\citenamefont{B\"uhler}(2014)}]{Buhler}
\bibinfo{author}{\bibfnamefont{O.}~\bibnamefont{B\"uhler}},
  \emph{\bibinfo{title}{Waves and mean flows}}, Cambridge Monographs on
  Mechanics (\bibinfo{publisher}{Cambridge University Press, Cambridge},
  \bibinfo{year}{2014}), \bibinfo{edition}{2nd} ed.

\bibitem[{\citenamefont{Holland}(1978)}]{Holland78}
\bibinfo{author}{\bibfnamefont{W.~R.} \bibnamefont{Holland}},
  \bibinfo{journal}{Journal of Physical Oceanography}
  \textbf{\bibinfo{volume}{8}}, \bibinfo{pages}{363} (\bibinfo{year}{1978}).

\bibitem[{\citenamefont{Arakawa}(1966)}]{Arakawa66}
\bibinfo{author}{\bibfnamefont{A.}~\bibnamefont{Arakawa}},
  \bibinfo{journal}{Journal of Computational Physics}
  \textbf{\bibinfo{volume}{1}}, \bibinfo{pages}{119} (\bibinfo{year}{1966}).

\end{thebibliography}

\end{document}